\author{Ioannis Z. Koukoutsidis}
\title{A fluid reservoir model for the Age of Information through energy-harvesting transmitters\footnote{Paper presented at the 2021 International Symposium on Performance Evaluation of Computer and Telecommunication Systems (SPECTS 2021). \copyright The Society for Modeling \& Simulation International (SCS) 2021}}
\newtheorem{lemma}{Lemma}
\theoremstyle{definition}
\begin{document}
\maketitle
\begin{abstract}
We apply a fluid-reservoir model to study the Age-of-Information (AoI) of update packets through energy-harvesting transmitters. The model is closer to how energy is stored and depleted in reality, and can reveal the system behavior for different settings of packet arrival rates, service rates, and energy charging and depletion rates. We present detailed results for both finite and infinite transmitter buffers and an infinite energy reservoir, and some indicative results for a finite reservoir. The results are derived for the mean AoI in the case of an infinite transmitter buffer and an infinite reservoir, and for the mean peak AoI for the remaining cases. The results show that, similar to a system without energy constraints, the transmitter buffer should be kept to a minimum in order to avoid queueing delays and maintain freshness of updates. Furthermore, a high update packet rate is only helpful in energy-rich regimes, whereas in energy-poor regimes more frequent updates deplete the energy reservoir and result in higher AoI values.
\end{abstract}
\section{Introduction}
\label{sec:Intro}
The Age of Information (AoI), or simply age, refers to the freshness of information from a remote source, until it is received at its intended destination. In packet networks, we consider that information updates are
received in packets, and at each time instant $t$, the destination observes an age $\text{AoI}(t) = t - u(t)$, where $u(t)$ is the generation time of the last packet received. 

The original definition in \cite{kaul2012real} considered the \textit{mean} AoI, which for a stationary and ergodic process converges to the average age seen by the destination at a random instant in time. Another measure is the \textit{mean peak} AoI, introduced in \cite{costa2014age}, which converges to the average of the maximum age values, prior to the packets being received at the destination.

The difference between the two metrics is illustrated in Fig.~\ref{sawtooth}. We consider that packets arrive at a queue for transmission.  $S_i$ is the sojourn time of the $i$-th packet received at the destination, $\alpha_i$ its arrival instant and $d_i$ its departure instant ($S_i=\alpha_i-d_i$). For simplicity, assume that the packets arrive with zero age, so that the arrival times equal the generation times; additionally, that there is no delay from departure until packets are received at the destination. 
The age process is depicted as the sawtooth curve, and is right-continuous with left limits. The solid black circles correspond to the sojourn times, while the left limits (solid white circles) average to the mean peak age of the update process.

\begin{figure}[!htb]
	\centering
	\includegraphics[scale=0.9]{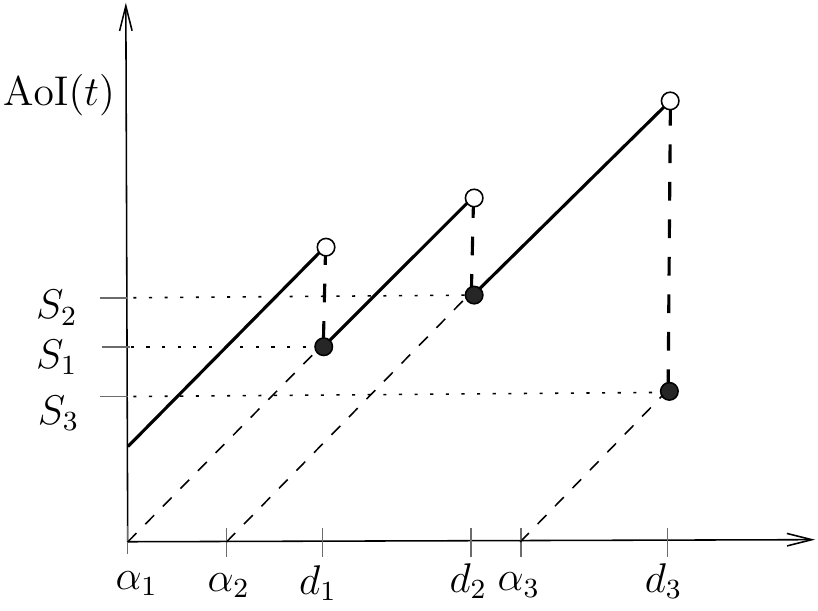}
	\caption{Sample path start of the age update process\label{sawtooth}}
\end{figure}

The mean AoI equals
\begin{displaymath}
	\mathbb{E}[\text{AoI}]=\lim_{t\to\infty}\frac{1}{t}\int_{0}^{t}\text{AoI}(t)dt\;.
\end{displaymath}
Denote the interarrival interval $A_i\equiv \alpha_{i+1}-\alpha_{i}$ and the interdeparture interval by $D_i\equiv d_{i+1}-d_{i}$. As can be seen from the figure, the peak age of update packet $i$, $\text{AoI}_{i,peak}$, equals $\text{AoI}_{i,peak}=A_{i-1}+S_{i}=D_{i-1}+S_{i-1}$. For a stationary and ergodic process, taking expectations leads to
\begin{equation}
	\mathbb{E}[\text{AoI}_{peak}]=\mathbb{E}[A]+\mathbb{E}[S]=\mathbb{E}[D]+\mathbb{E}[S]\;.
	\label{mean_peak_age}
\end{equation}
Clearly, the mean peak AoI is also an upper bound to the mean AoI. 
There also exist other age-related metrics, but only these two will be used in this paper.\footnote{Examples are the {\it New} AoI defined in \cite{kesidis2020distribution}, as well as age metrics with non-linear penalty functions to represent differently the value of information at the receiver as data gets older \cite{sun2017update}.}.

The AoI -- in its various forms -- can serve as a metric in numerous applications, where we are interested in the freshness of the received information. For example, in the Internet of Things, where sensor devices can transmit updates of environmental parameters, or the values of technical parameters such as location and velocity in autonomous vehicles; in the storage of data in computer systems, where we are interested in the freshness of imformation in the cache memory, or in robotics and control systems, where the fast feedback of information plays a prominent role. An extensive survey of the AoI metric and its applications can be found in \cite{yates2020age}.

A large number of works have considered energy-constrained updating, where the ability to make an update is constrained by 
the energy of the transmitter. Attempts to calculate the impact of energy constraints have so far considered the simplifying assumption of Poisson arrivals of discrete ``energy units'' --- one unit assumed to provide enough energy to serve one packet (see Section~\ref{sec:rel_work} for a description of related work). 

In this paper, we depart from this simplifying assumption and apply the fluid-reservoir model presented in \cite{adan1998analysis} to derive results for the mean AoI and the mean peak AoI for an energy-constrained transmitter. In this model, packets are queued at a server for transmission and energy is considered as a fluid commodity, which accumulates when the server is empty and depletes when the server is busy --- at different but constant rates. The model is closer to reality, and avoids certain artifacts of Poisson energy arrival models (see Section~\ref{sec:rel_work}).

The model is presented in Section~\ref{sec:model}. Throughout the paper, a FCFS (First-Come First-Served) service discipline is assumed. For the sake of comparison, a transmitter queue that is regulated by the energy in the reservoir is termed as a ``regulated queue'', in contrast to an ``unregulated queue'', where there are no energy constraints. We will follow the structure of \cite{adan1998analysis}, which examined different cases, depending on the size of the waiting room in the server and the size of the reservoir. Results for the mean AoI are derived only for the case of infinite waiting room in the server and an infinite reservoir, as it is the only case where a closed-form expression for the sojourn time distribution can be derived (which is a requirement for calculating the mean AoI). These are presented in Section~\ref{sec:inf-inf}. For the case of finite waiting room and an infinite reservoir, Section~\ref{sec:fin-inf} presents results for the mean peak AoI, based on the stationary distribution of the system. We derive closed-form formulas for the mean AoI in the case of an M/M/1 queue, and the mean peak AoI in the case of an M/M/1/1 queue, both under the assumption of an infinite reservoir. Apart from that, we conduct an extensive performance evaluation, comparing the finite and infinite buffer cases, as well as performances for the regulated and unregulated queues. Finally, in Section~\ref{sec:inf-fin} we present some indicative results for the mean peak AoI in the case of a finite capacity reservoir, based on values of the mean sojourn time derived by simulation in \cite{adan1998analysis}. The major conclusions are presented in Section~\ref{sec:conclusions}, along with a description of issues that remain open.
\section{Related Work}
\label{sec:rel_work}
Previous research in energy-constrained updating can be grouped in two categories: a) non-queueing settings, which consider that transmitters can generate updates at will, with zero delay from the time of packet generation till packet transmission, and b) queueing settings, where the updates arrive exogenously by another process (not controlled by the transmitter), and are queued for transmission.  Although this work belongs to the second category, it is worth visiting some results derived in non-queueing settings.

In such (non-queueing) settings, some sort of scheduling problem is usually considered, in which one must choose the optimal points in time in which to send the updates. Most prominent references are \cite{wu2017optimal} and \cite{arafa2019age}, where energy is assumed to arrive in units in a reservoir as a Poisson process with a rate of one arrival per time unit, and each update packet consumes one such unit of energy for transmission. In the case of infinite reservoir capacity, it was shown in \cite{wu2017optimal} that the optimal strategy is to send updates at each time unit, provided there is energy available; this is no surprise, since this particular setting corresponds to an M/M/1 queue with load equal to one, and hence in the long run the reservoir will grow to infinity. In the case of a finite-capacity reservoir, it was shown in \cite{arafa2019age} that the transmitter should send packets more frequently, when it has relatively higher energy available, and less frequently, when it has relatively lower energy available.

In queueing settings, an initial high-impact work was \cite{farazi2018average}. In the model in that paper, new updates are considered to enter service if the server is idle and has sufficient energy to service the packet; whereas if the server is busy or there is no sufficient energy, the packet is dropped (essentially corresponding to an M/M/1/1 queue with the addition of an energy container). The authors considered both cases where the container \emph{is} able to harvest energy when the server is busy, and \emph{is not} (showing the decrease in AoI in the first case).
In \cite{farazi2018age}, the authors extended this work for preemption in service, with the assumption that the unit of energy ``assigned'' to the update of a packet is lost, if the packet is preempted in service. They showed that preemption in service can further decrease the average age, but only in ``energy-rich'' operating regimes (otherwise it can lead to waste of energy and inability to serve subsequent packets with fresher information). Similarly to their previous work, they also showed that a lower AoI is achieved when the server is also able to harvest energy while it is busy.
Finally, the work in \cite{zheng2019closed} extended the results to queues of arbitrary length and LCFS (Last-Come First-Served) policies, and non-linear penalty functions. The authors derived closed-form expressions for the mean AoI and other penalty functions under the FCFS and LCFS when the service time is negligible. When the service time is not negligible, they developed a method based on QBD (Quasi-Birth Death) processes to numerically find the mean peak AoI.

The assumption of negligible or zero service time is reminiscent of classical leaky-bucket schemes, which are concerned with the process in which a packet obtains ``access'' or ``permission'' for transmission, and the generated rate of incoming packets \cite[pp.~511-515]{bertsekas1992robert}. By contrast, the model used here can incorporate both transmit buffers of arbitrary sizes (thus allowing the evaluation of M/M/1/N schemes), and non-zero service time at the transmitter.

Another major difference with the works in \cite{farazi2018average,farazi2018age,zheng2019closed} lies in the energy model used. Similarly to \cite{wu2017optimal} and \cite{arafa2019age}, \cite{farazi2018average,farazi2018age,zheng2019closed} assumed that energy arrives in units and the service of a packet always consumes one unit of energy. Both update packets and energy units arrive as Poisson processes. This setting leads to an interplay between the arrival rates of the two processes and in some cases in strange artifacts, such as the invariance of the AoI if we interchange the packet and energy arrival rates (see \cite{farazi2018average,farazi2018age}). By contrast, the model used here considers constant charging and depletion rates, which is closer to reality. Indeed, in the case of solar battery charging (which is a likely scenario for IoT devices), it has been shown that charging is done at an almost constant rate; despite drops in the incoming charging current (due, e.g. to a more clouded sky), the charging voltage remains constant for periods of several minutes or even hours (see, e.g. \cite{suresh2014efficient,boico2007solar}). Therefore, although the assumption of Poisson energy arrivals simplifies the analysis, it implies significant randomness in the charging rate which does not reflect reality. As far as energy is concerned, the model used here is more realistic, and, despite its limitations, it can reveal the system behavior for different settings of update packet arrivals, service rates, and energy charging and depletion rates.
\section{The Model}
\label{sec:model}
We shortly describe the model in \cite{adan1998analysis}. We consider that packets arrive from an external source to a transmitter according to a Poisson process with rate $\lambda$. The transmitting process is modeled as a FCFS queue; both infinite and finite queues will be examined. Service is regulated by a fluid reservoir, which determines the amount of energy which is available for transmission. Energy is treated as a fluid commodity, which fills during idle periods of the server at a constant rate $r_{+}$, and depletes during busy periods at a constant rate $r_{-}$, as long as the reservoir is non-empty (Fig.~\ref{model}).
When the reservoir is non-empty, the service time of update packets is exponentially distributed with rate $\mu_1$, and when the reservoir is empty, it is exponentially distributed with rate $\mu_2\leq \mu_1$. 
\begin{figure}[!htb]
	\centering
	\includegraphics[scale=0.9]{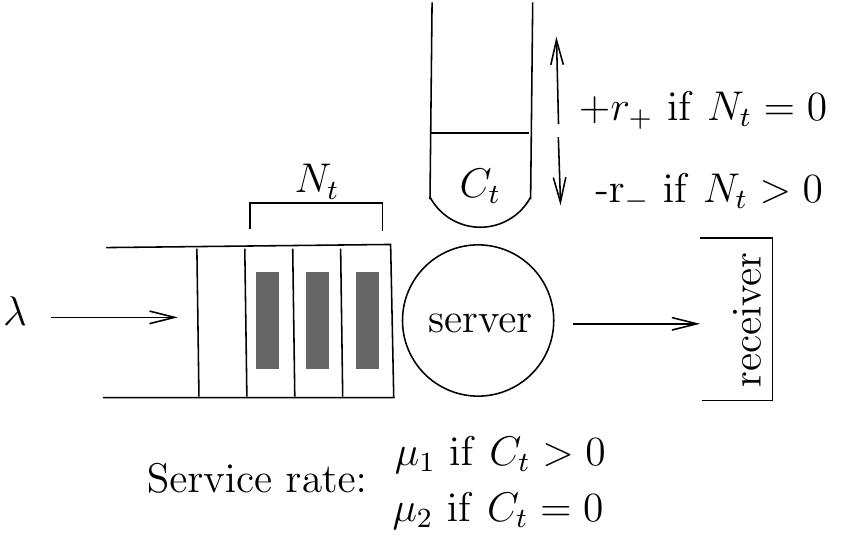}
	\caption{The fluid-reservoir model ($N_t$ is the number of packets in the system and $C_t$ the amount of fluid in the reservoir at time $t$)\label{model}}
\end{figure}

Obviously, when $\mu_1=\mu_2$ the reservoir does not effect the service process, and therefore the model is the same as that of a simple (unregulated) queue. When $\mu_2< \mu_1$, the impact of the energy constraint is that it delays the transmitted packets, so that they arrive later at the receiver. Our goal is to find how that impacts on the AoI at the receiver.

It is emphasized that when the reservoir is empty, the service does not stop, but just slows down. This is done for analytical tractability, but could also model situations when the transmitter goes into battery-saving mode (as long as the energy threshold for such a mode is calibrated so that the transmitter does not go completely out of energy for a relatively long period). The model can also include the case where the receiver is able to accummulate energy during busy hours (as was examined in \cite{farazi2018average,farazi2018age}), by simply considering lower depletion rates.\footnote{The case of a negative depletion rate is not interesting, since the transmitter would never run out of energy.}
\section{Infinite Waiting Room and Infinite Reservoir}
\label{sec:inf-inf}
For a FCFS M/M/1 queue and an infinite reservoir, exact expressions were derived in \cite{adan1998analysis} for the distributions of the stationary sojourn time $S$ and the stationary waiting time $W$ of a customer. 
The distribution of the stationary sojourn time is:
\begin{equation}
	\text{Pr}\{S>s\}=\zeta e^{-\lambda\sigma^{-1}(1-
		\sigma)s}+(1-\zeta)e^{-(\mu_2-\lambda)s}, \quad s\geq 0
	\label{soj_time_distr}
\end{equation}

where
\vspace{6pt}

$\zeta\equiv\displaystyle\frac{(\mu_1-\mu_2)(\mu_2-\lambda)\sigma}{(\lambda-\mu_2\sigma)(\mu_2-\mu_1\sigma)}<1\quad\refstepcounter{equation}(\theequation)\label{zeta}$ \quad and \quad $\sigma=\displaystyle\frac{r_+}{r_+ + r_-}\quad\refstepcounter{equation}(\theequation)\label{sigma}$\;.

\vspace{10pt}
The stationary waiting time distribution is:
\begin{equation}
	\text{Pr}\{W>s\}=\zeta\sigma e^{-\lambda\sigma^{-1}(1-
		\sigma)s}+\eta e^{-(\mu_2-\lambda)s}, \quad s\geq 0
\end{equation}

where $\zeta$ is as above and $\eta\equiv\displaystyle\frac{\lambda(\lambda-\mu_1\sigma)(1-\sigma)}{(\lambda-\mu_2\sigma)(\mu_2-\mu_2\sigma)}\quad\refstepcounter{equation}(\theequation)\label{eta}$\;.

\vspace{10pt}
Throughout we assume that the stability condition 
\begin{equation}
	\sigma<\lambda/\mu_1\leq \lambda/\mu_2<1
	\label{stability_condition}
\end{equation}
holds, so that the stationary distributions exist (the condition $\sigma<\lambda/\mu_1$ is for the stability of the fluid reservoir, whereas the conditions $\lambda/\mu_1\leq \lambda/\mu_2<1$ are for the stability of the infinite queue \cite{adan1998analysis}).

From the above distributions we can also calculate the expected values of the sojourn time and waiting time of a customer in the system:
\begin{align}
	\mathbb{E}[S]=\int_0^\infty \text{Pr}\{S>s\}ds=\frac{\zeta}{\lambda\sigma^{-1}(1-\sigma)}+\frac{1-\zeta}{\mu_2-\lambda}
	\label{exp_soj_time}
\end{align}
and 
\begin{align}
	\mathbb{E}[W]=\int_0^\infty \text{Pr}\{W>s\}ds=\frac{\zeta\sigma}{\lambda\sigma^{-1}(1-\sigma)}+\frac{\eta}{\mu_2-\lambda}\;.
	\label{exp_wait_time}
\end{align}

We then proceed to derive a closed-form expression for the mean AoI. 
From the analysis of the AoI for the M/M/1 FCFS queue \cite{kaul2012real}, we know that
\begin{equation}
	\mathbb{E}{[\text{AoI}]}=\frac{\mathbb{E}[S A]}{\mathbb{E}[A]}+\frac{\mathbb{E}[A^2]}{2\mathbb{E}[A]}\;,
	\label{EAoI}
\end{equation}
where $A$ is the interarrival interval and $S$ is the sojourn time of update packets in the system.
Remember that the interarrival distribution is assumed to be exponential with rate $\lambda$, so $\mathbb{E}[A]=1/\lambda$
and $\mathbb{E}[A^2]=2/\lambda^2$.

Splitting the sojourn time into waiting time $W$ and service time $X$, we have:
\begin{equation}
	\mathbb{E}[SA]=\mathbb{E}[(W+X)A]=\mathbb{E}[WA]+\mathbb{E}[X]\mathbb{E}[A]\label{ESA}
\end{equation}
(from independence of service time and interarrival time).
The expected service time can be calculated from (\ref{exp_soj_time}), (\ref{exp_wait_time}) as:
\begin{align}
	\mathbb{E}[X]&=\mathbb{E}[S]-\mathbb{E}[W]=\frac{\zeta\sigma}{\lambda}+\frac{1-\zeta-\eta}{\mu_2-\lambda} \nonumber\\
	&=\frac{\zeta\sigma}{\lambda}+\frac{(1-\sigma)(\lambda\mu_2-\mu_1\mu_2\sigma-\lambda^2+\lambda\mu_1\sigma)}{(\mu_2-\lambda)(\lambda-\mu_2\sigma)(\mu_2-\mu_1\sigma)}\;.\label{EX}
\end{align}
The expected value of the product of waiting and interarrival times is calculated as follows:
\begin{align}
	\mathbb{E}[WA]&=\mathbb{E}\left[\mathbb{E}[WA|A=a]\right]=\mathbb{E}\left[a\mathbb{E}[(S-a)^+]\right]\nonumber\\
	&=\int_{0}^{\infty}\int_{a}^{\infty}a(t-a)f_S(t)f_A(a)dt da\nonumber\;,
\end{align}
where $f_S$, $f_A$ are the probability density functions of the sojourn time and the interarrival time, respectively.

From (\ref{soj_time_distr}), we can easily derive that
\begin{align*}
	f_S(t)=&\zeta \lambda\sigma^{-1}(1-\sigma)e^{-\lambda\sigma^{-1}(1-\sigma)t}\\
	&+(1-\zeta)(\mu_2-\lambda)e^{-(\mu_2-\lambda)t}\;.
\end{align*}
In addition, $f_A(a)=\lambda e^{-\lambda a}$. We therefore have:
\begin{align}
	&\mathbb{E}[WA]=\nonumber\\
	&\int_0^\infty a \lambda e^{-\lambda a} \int_{a}^{\infty} (t-a) \left[\zeta \lambda\sigma^{-1}(1-\sigma)e^{-\lambda\sigma^{-1}(1-\sigma)t}\right]dt da\nonumber \\
	&+\int_0^\infty a \lambda e^{-\lambda a} \int_{a}^{\infty} (t-a) \left[(1-\zeta)(\mu_2-\lambda)e^{-(\mu_2-\lambda)t}\right]dt da\nonumber\\
	=&\int_{0}^{\infty} a \lambda e^{-\lambda a} \zeta \frac{e^{-\lambda\sigma^{-1}(1-\sigma)a}}{\lambda\sigma^{-1}(1-\sigma)}da\nonumber\\
	&+\int_0^\infty a \lambda e^{-\lambda a}(1-\zeta)\frac{e^{-(\mu_2-\lambda)a}}{\mu_2-\lambda}da\nonumber\\
	=&\frac{\zeta\sigma}{1-\sigma}\frac{1}{(\lambda+\lambda\sigma^{-1}(1-\sigma))^2}+\frac{\lambda(1-\zeta)}{\mu_2-\lambda}\frac{1}{\mu_2^2}\;.\label{EWA}
\end{align}

Substituting (\ref{EWA}) and (\ref{EX}) in (\ref{ESA}), (\ref{EAoI}) finally becomes:
\begin{align}
	\mathbb{E}{[\text{AoI}]}=&\frac{\lambda\zeta\sigma}{(1-\sigma)(\lambda+\lambda\sigma^{-1}(1-\sigma))^2}+\frac{\lambda^2(1-\zeta)}{(\mu_2-\lambda)\mu_2^2}\nonumber\\
	&+\frac{\zeta\sigma}{\lambda}+\frac{(1-\sigma)(\lambda\mu_2-\mu_1\mu_2\sigma-\lambda^2+\lambda\mu_1\sigma)}{(\mu_2-\lambda)(\lambda-\mu_2\sigma)(\mu_2-\mu_1\sigma)}+\frac{1}{\lambda}\;.\label{EAoI2}
\end{align}
Note that for $\mu_1=\mu_2=\mu$, $\zeta=0$ and (\ref{EAoI2}) becomes 
\begin{displaymath}
	\frac{\lambda^2}{\mu^2(\mu-\lambda)}+\frac{1}{\mu}+\frac{1}{\lambda}\;,
\end{displaymath}
which is the expected value of the AoI of a single server M/M/1 FCFS queue, as originally derived in \cite{kaul2012real}.

We proceed to evaluate the performance of this system. Our aim is to see how the mean AoI behaves for different values of arrival and service rates, and different fill and depletion rates of the fluid reservoir (e.g. when the reservoir fills faster than it depletes, and vice-versa).

Our basis for comparison of the results will be the unregulated FCFS M/M/1 queue. Obviously, since the reservoir introduces delay, we expect the AoI to always be greater in the regulated queue. Another aspect relates to the major conclusion in \cite{kaul2012real}, namely that in the FCFS M/M/1 queue there is an optimal intermediate load value, for which the $\mathbb{E}$[AoI] is minimal ($\lambda\approx 0.53 \mu$). That is, if we want to minimize the mean AoI at the receiver, it is not optimal to send packets neither at a slow rate, nor at a fast rate. We want to see if this conclusion still holds for the regulated queue, and if yes what is the required load in comparison to the unregulated queue.

One major constraint is that the parameter space is smaller in comparison to the unregulated queue, since we have the additional stability constraint for the fluid reservoir. Equation (\ref{stability_condition}) says that, in order to have stability of the reservoir, we need to send packets above a certain arrival rate (for a certain service rate), so that the reservoir becomes empty with non-zero probability. This significantly constraints the allowed values of $\lambda$; since we know that it is not optimal to send update packets at a fast rate, the question is whether the optimal arrival rate is at the lower edge or inside the stability region.

\begin{figure*}[!htb]  
	\begin{subfigure}{0.5\textwidth}
		\includegraphics[width=\linewidth]{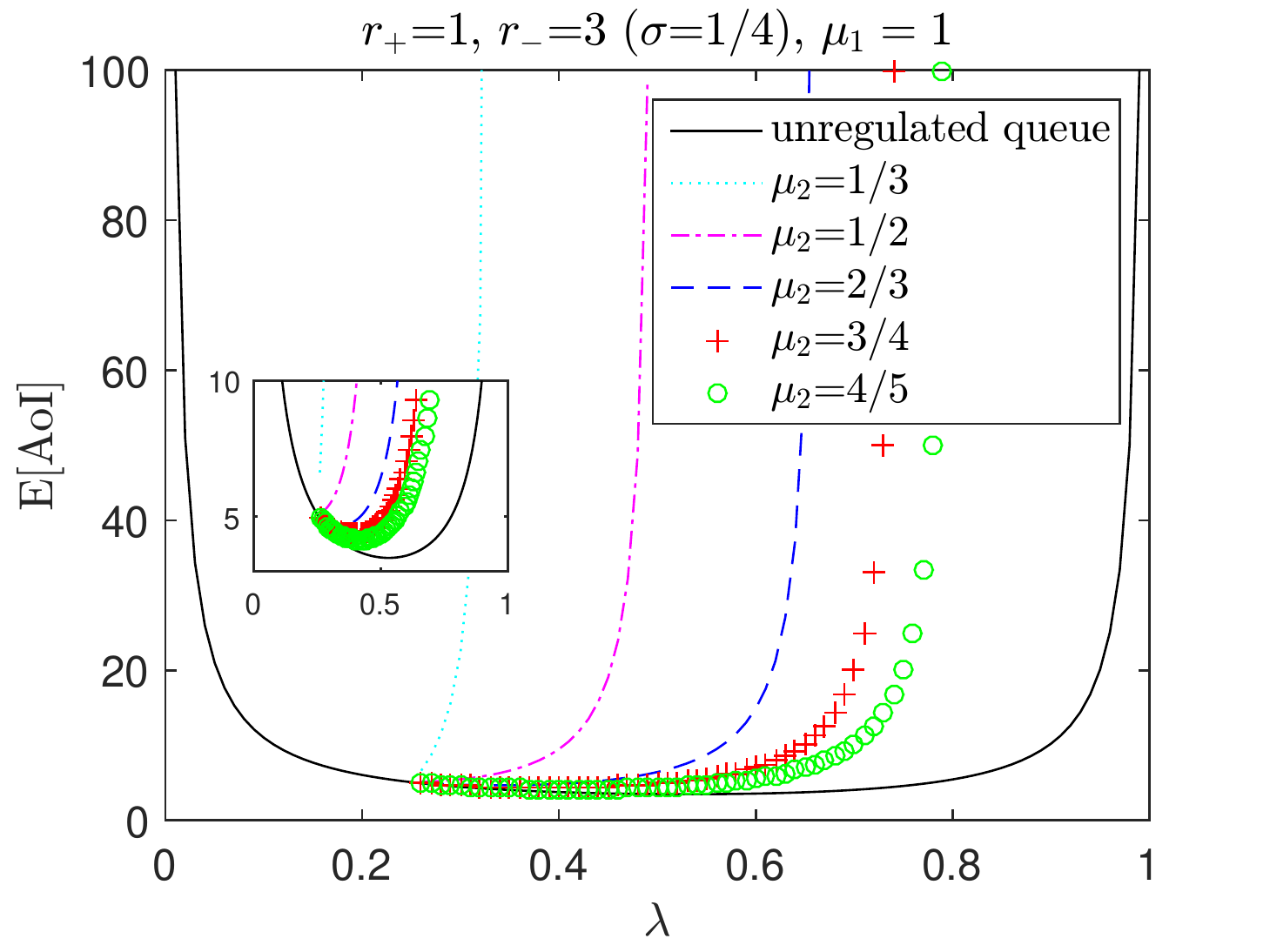}
		\caption{}
		\label{plot_sc_1_4}
	\end{subfigure}
	\hfill 
	\begin{subfigure}{0.5\textwidth}
		\includegraphics[width=\linewidth]{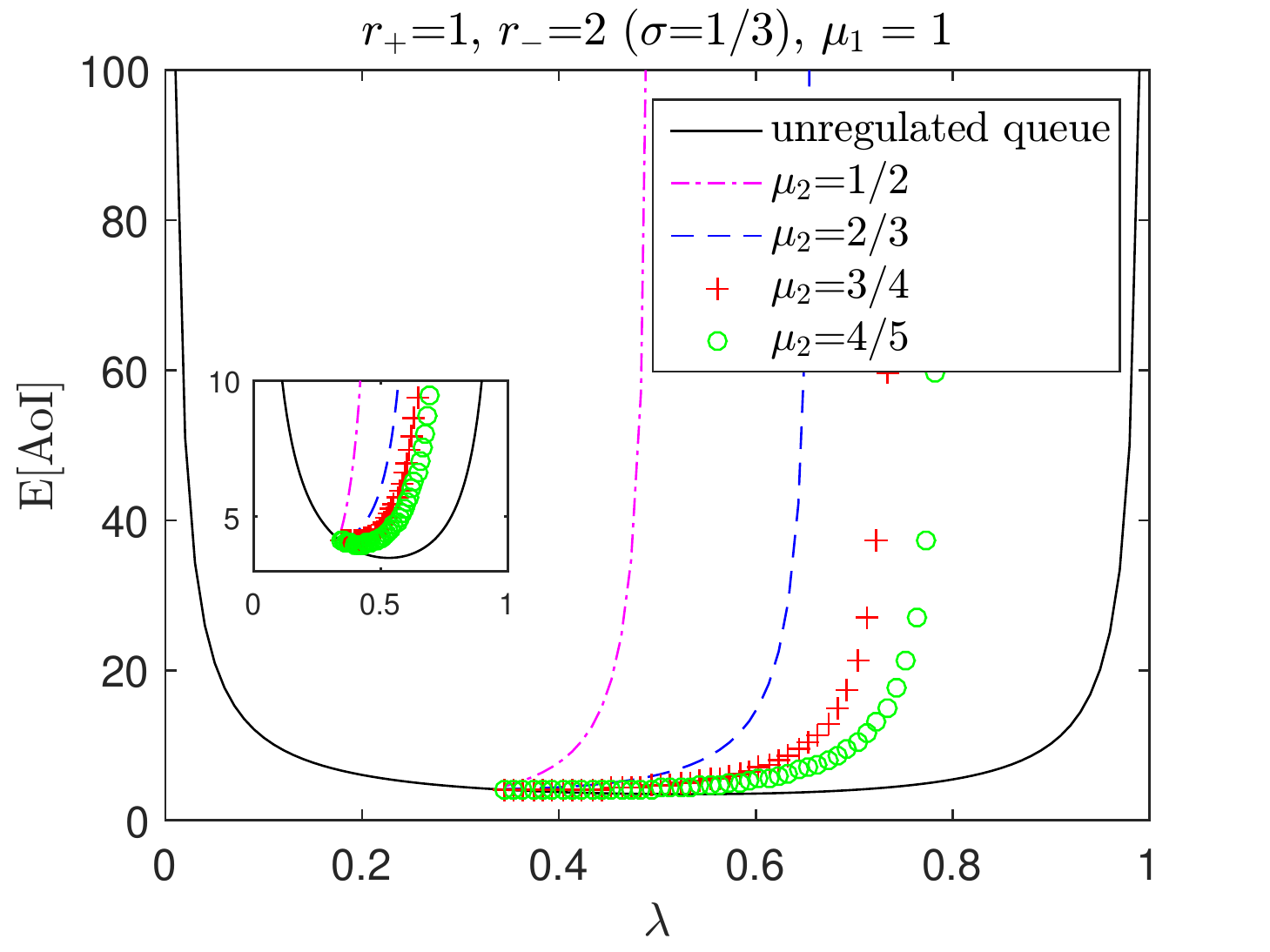}
		\caption{}
		\label{plot_sc_1_3}
	\end{subfigure}
	\begin{subfigure}{0.5\textwidth}
		\includegraphics[width=\linewidth]{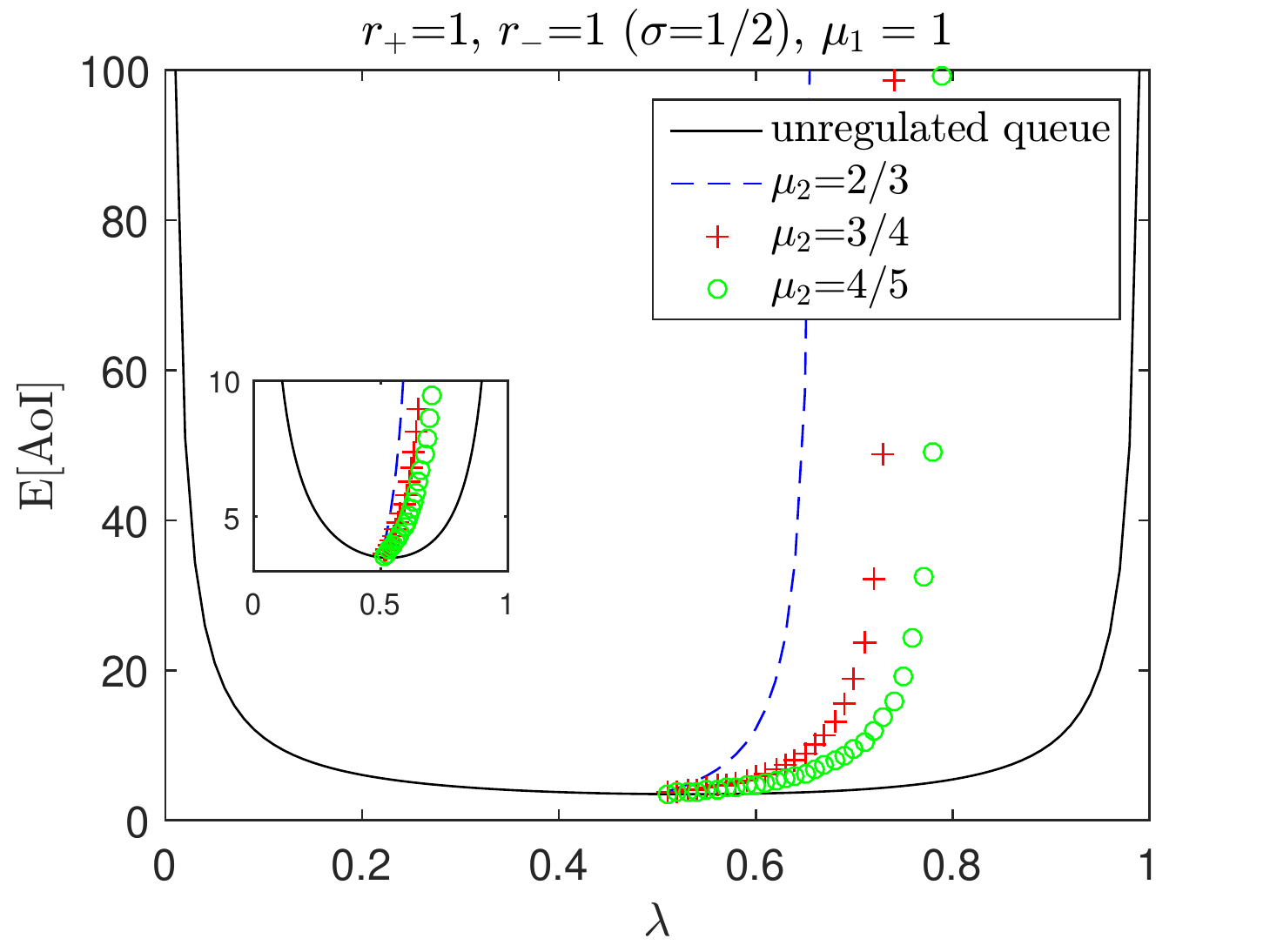}
		\caption{}
		\label{plot_sc_1_2}
	\end{subfigure}
	\hfill 
	\begin{subfigure}{0.5\textwidth}
		\includegraphics[width=\linewidth]{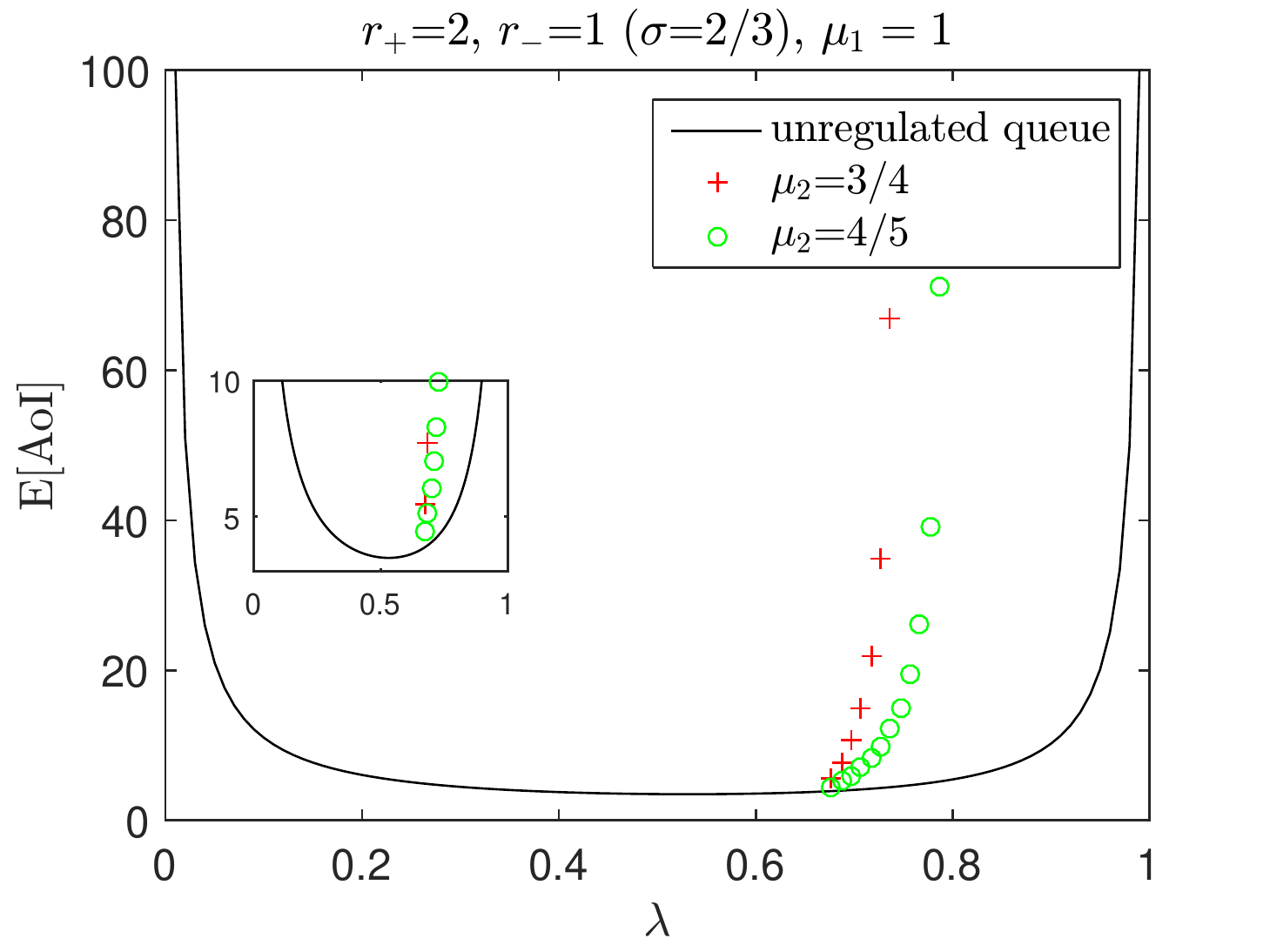}
		\caption{}
		\label{plot_sc_2_3}
	\end{subfigure}
	\caption{$\mathbb{E}[\text{AoI}]$ for infinite waiting room and infinite reservoir} 
	\label{plots_single_class}
\end{figure*}

Results are shown in Fig.~\ref{plots_single_class}. In all results, we keep the service rate $\mu_1$ (the rate under a non-empty 
reservoir) equal to unity, and vary the remaining parameters. Each subfigure shows a different pair of ($r+$, $r_-$) values. 
A larger depletion vs. fill rate leads to ``energy-poorer'' regimes. Note also that a larger depletion rate allows to send packets at a smaller rate, whereas the arrival rate must be faster, if the fill rate is faster (in order to consume credits and maintain stability). 

If the depletion rate is faster than the fill rate (Fig.~\ref{plot_sc_1_4} and \ref{plot_sc_1_3}), then the optimal
arrival rate is within the stability region, if in addition the service rate $\mu_2$ (the service rate under an empty reservoir) is 
not too low. This can be seen in the zoomed-in inset plots of Figs.~\ref{plot_sc_1_4}, \ref{plot_sc_1_3}. In Fig.~\ref{plot_sc_1_4}, for $\mu_2=2/3$, the optimal arrival rate is about 0.34, whereas for $\mu_2=4/5$, it is about 0.42. These are smaller than the optimal value in the unregulated case ($\lambda\approx0.53$). Generally, in order to achieve a small age, the energy constraints require to decrease the update packet arrival rate further than in the unregulated case, and it is near-optimal to send update packets with an arrival rate that is as small as possible within the stability region.

We also note that as the service rate $\mu_2$ becomes smaller, the mean AoI increases very fast,
even for moderate values of the arrival rate. We then need to decrease the arrival rate even further, in order to achieve a small age.  For example, in Fig.~\ref{plot_sc_1_4}, for $\mu_2=1/3$ an arrival rate smaller than 0.26 is needed to keep the mean AoI smaller than twice the optimal mean AoI in the unregulated queue. These results imply that the length of the update queue (which increases for smaller $\mu_2$) is still the most important parameter, and must be kept small to achieve a small age. They also provide a foretaste of the gains that can be achieved by small buffers, which is studied in the next section.
\section{Finite Waiting Room and Infinite Reservoir}
\label{sec:fin-inf}
Here we introduce a constraint $N$ on the number of the update packets that can be stored in the transmitter's buffer (including the packet in service), and proceed to calculate the mean peak age, denoted by $\mathbb{E}[\text{AoI}_{peak,N}]$. From Section~\ref{sec:Intro}, this equals $\mathbb{E}[A]+\mathbb{E}[S]$, where $S$ is the sojourn time and $A$ is the interarrival time between update packets that are successfully received at the destination (i.e. excluding the packets that are blocked). 
For a FCFS discipline and a Poisson arrival process, $\mathbb{E}[A]=1/\lambda(1-P_{N})$, where $P_{N}$ is the blocking probability, or the probability that an arriving packet sees the buffer full. The difficult part is the calculation of the mean sojourn time. For this we rely on the following lemma:
\begin{lemma}
	\label{lemma:finite_buff}
	In the model of Sect.~\ref{sec:model}, the limiting probability of the number of packets in the system, $p_i$, $i\in \mathcal{N}=\{0,1,\dots,N\}$ can be found by solving the system of $2N+2$ linearly independent equations
	\begin{equation}
		\begin{aligned}
			\mu_1 y_1&=(\lambda+r_ +\xi_0^{(N)})y_0\;, \\
			\mu_1 y_{i+1}&=(\lambda+\mu_1-r_ -\xi_0^{(N)})y_i-\lambda y_{i-1}\;, i\in \mathcal{N}\backslash\{0,N\}\;,\\
			p_0&=y_0\;,\\		
			\mu_2 p_{i+1}&=\lambda p_i-(\mu_1-\mu_2)y_{i+1}, i\in\mathcal{N}\backslash\{N\}\;,\\
			\sum_{i\in\mathcal{N}}p_1&=1
		\end{aligned}
		\label{sys_eq_fin_buff}
	\end{equation}
	where $\xi_0^{(N)}$ is the unique negative zero of the polynomial $P_N(x)$, $N=1,2,\dots$, defined by the recurrence relations
	\begin{subequations}
		\begin{align}
			P_0(x)&=1\;,\nonumber\\
			P_1(x)&=x+\frac{\lambda}{r_+}-\frac{\mu_1}{r_-}\;,\label{P_1(x)}\\
			P_N(x)&=\left(x+\frac{\lambda+\mu_1}{r_-}\right)P_{N-1}(x)-\frac{\lambda\mu_1}{r_-^2}P_{N-2}(x),\\
			&\qquad\qquad\qquad\qquad\qquad\qquad N=2,3,\dots\nonumber.
		\end{align} 
	\end{subequations}
\end{lemma}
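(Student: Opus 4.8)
The plan is to treat $(N_t,C_t)_{t\ge 0}$ as a Markov-modulated fluid process whose ``phase'' is the number of packets in the system: on $\{0,\dots,N\}$ the phase is a birth–death chain with up-rate $\lambda$ and down-rate $\mu_1$ while the reservoir is non-empty (and down-rate $\mu_2$ while it is empty), and the fluid $C_t$ has drift $r_+$ in phase $0$ and $-r_-$ in phases $1,\dots,N$. Since phase $0$ is the only phase with positive drift, the reservoir can be empty only in phases $\ge 1$, so, writing $a_i=\Pr\{N=i,\,C=0\}$, one has $a_0=0$, while on $(0,\infty)$ the stationary law has a density $f_i(\cdot)$ in each phase. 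Under the stability assumption the process is positive recurrent and this description is unique; I put $y_i=\int_0^\infty f_i(x)\,dx=\Pr\{N=i,\,C>0\}$ and $p_i=a_i+y_i$, so that $p_0=y_0$ and $\sum_i p_i=1$ at once. The family coupling $p_i$ and $y_i$ I would obtain purely from a level-crossing (``cut'') argument on $N_t$: in steady state the up-crossing rate of level $i+\tfrac12$ is $\lambda p_i$, the down-crossing rate is $\mu_1 y_{i+1}+\mu_2 a_{i+1}=\mu_2 p_{i+1}+(\mu_1-\mu_2)y_{i+1}$, and equating these gives $\mu_2 p_{i+1}=\lambda p_i-(\mu_1-\mu_2)y_{i+1}$ for $i\in\mathcal N\setminus\{N\}$ (the buffer bound entering only through the absence of crossings above level $N$).

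The two families involving only the $y_i$ come from the stationary forward equations of the fluid on $(0,\infty)$. With $R=\mathrm{diag}(r_+,-r_-,\dots,-r_-)$ invertible and the modulating generator for $C>0$ the tridiagonal birth–death generator $Q$, the densities solve the linear constant-coefficient system $\mathbf f'(x)R=\mathbf f(x)Q$, whose solutions are combinations of exponential modes $\mathbf c\,e^{\xi x}$ with $\xi$ a root of $\det(Q-\xi R)=0$; integrability on $(0,\infty)$ forces retaining only modes with $\mathrm{Re}\,\xi<0$. Expanding the $(N+1)\times(N+1)$ tridiagonal determinant — peeling off the identical negative-drift phases one at a time while keeping the asymmetric positive-drift row to supply the base case $P_1$ — yields, after dividing out the trivial root $\xi=0$ and normalising, a polynomial obeying exactly the three-term recurrence that defines $P_N$. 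Writing the retained mode as $f_i(x)=c_i e^{\xi_0^{(N)}x}$, so that $y_i=-c_i/\xi_0^{(N)}$ is proportional to the null vector $\mathbf c$, the first $N$ components of the null-vector equation $\mathbf c\,(Q-\xi_0^{(N)}R)=0$ read precisely $\mu_1 y_1=(\lambda+r_+\xi_0^{(N)})y_0$ and $\mu_1 y_{i+1}=(\lambda+\mu_1-r_-\xi_0^{(N)})y_i-\lambda y_{i-1}$, the remaining component holding automatically because $P_N(\xi_0^{(N)})=0$. (As a check, the first $y$-equation and the $i=0$ cut-equation together force $r_+f_0(0)=\mu_2 a_1$, the natural flux balance at the reflecting level $C=0$, so no boundary condition has been dropped.)

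The step I expect to be the main obstacle is the structural claim that $\det(Q-\xi R)=0$ has exactly one root with negative real part — equivalently that $P_N$ has exactly one negative zero $\xi_0^{(N)}$, so that the admissible fluid solution is genuinely one-dimensional. I would establish this either from the general spectral theory of Markov-modulated fluid queues, where the number of eigenvalues with negative real part equals the number of positive-drift phases, here exactly one; or directly from the recurrence, whose positive product coefficient makes $(P_N)$ a sequence of orthogonal-type polynomials, so that all roots are real and simple and those of $P_{N-1}$ interlace those of $P_N$. The stability hypothesis makes the single root of $P_1$ negative, and an induction on $\mathrm{sign}\,P_N(0)$ using the recurrence (together with the elementary inequality $(\lambda+\mu_1)^2>4\lambda\mu_1$) shows it equals $(-1)^{N+1}$, which combined with interlacing rules out a second negative root at every stage; being the unique root in its region, $\xi_0^{(N)}$ is then real.

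It remains to conclude. By the derivations above the (unique) stationary distribution satisfies all $2N+2$ equations of (\ref{sys_eq_fin_buff}), so the system is consistent; conversely, the $y$-recursion expresses $y_1,\dots,y_N$ as fixed multiples of $y_0$, then $p_0=y_0$ and the cut-balance equations express $p_1,\dots,p_N$ as fixed multiples of $y_0$, and the normalisation equation — whose left side is a strictly positive multiple of $y_0$ — fixes $y_0$; hence the system has a unique solution. A square linear system with a unique solution is nonsingular, i.e.\ its $2N+2$ equations are linearly independent, which is exactly the assertion of the lemma. $\square$
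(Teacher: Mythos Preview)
The paper does not give its own argument here: the entire proof is the pointer ``See \cite[Section~2]{adan1998analysis}.'' Your plan is a faithful reconstruction of exactly that machinery---decomposing the stationary law into densities $f_i$ on $\{C>0\}$ and atoms $a_i$ at $C=0$, obtaining the $p$/$y$ balance from a level-crossing cut, reading the $y$-recursion off the left null vector of $Q-\xi R$, identifying the characteristic polynomial with the three-term recurrence $P_N$, and invoking the ``number of negative eigenvalues $=$ number of positive-drift phases'' fact to isolate the unique admissible mode $\xi_0^{(N)}$---so your approach and the cited source's coincide; you have simply written out what the paper delegates.
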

\begin{proof}
	See \cite[Section 2]{adan1998analysis}.
\end{proof}
Since 
\begin{displaymath}
	\left\{1+\frac{\lambda}{\mu_1}+\left(\frac{\lambda}{\mu_1}\right)^2+\cdots+\left(\frac{\lambda}{\mu_1}\right)^N \right\}^{-1}
\end{displaymath}
is the stationary probability of the server being idle in a M/M/1/N system with arrival rate $\lambda$ and service rate $\mu_1$, for stability of the energy reservoir we must have
\begin{align}
	\frac{\lambda}{\mu_1}+\left(\frac{\lambda}{\mu_1}\right)^2+\cdots+\left(\frac{\lambda}{\mu_1}\right)^N>\frac{r_+}{r_-} 
	\label{stability_reservoir}
\end{align}
(this can be derived by setting the mean drift, i.e. the depletion rate times the non-idle probability of the server minus the fill rate times the probability of the server being idle, to be negative).

Note that the stability condition here depends only on the service rate $\mu_1$, and $\mu_2$ can even be set to zero. However, as we will see next, in that case the server may never be idle, and the AoI may grow to infinity.

We can apply Lemma~\ref{lemma:finite_buff} to derive a closed-form expression for the peak AoI in an M/M/1/1 system (no waiting space, an update that comes to a busy server is rejected). 

The system of linear equations (\ref{sys_eq_fin_buff}) becomes
\begin{equation}
	\begin{aligned}
		\mu_1 y_1&=(\lambda+r_+ \xi_0^{(1)})y_0\;, \\
		y_0&=p_0\; \\
		\mu_2 p_1&=\lambda p_0 - (\mu_1-\mu_2)y_1\;, \\
		p_o+p_1&=1\;,
	\end{aligned}
\end{equation}
where $\xi_0^{(1)}=\displaystyle \frac{\mu_1}{r_-} - \frac{\lambda}{r_+}$ from (\ref{P_1(x)}). This is always negative, since for stability of the fluid reservoir we must have 
$\displaystyle \frac{\lambda}{\mu_1}>\frac{r_+}{r_-}$ (from (\ref{stability_reservoir})).

Solving the system of linear equations, we obtain
\begin{subequations}
	\begin{align}
		p_0&=\frac{\mu_1\mu_2}{\mu_1\mu_2+\frac{r_+}{r_-}(\mu_1\mu_2-\mu_1^2)+\lambda\mu_1}\;,\\
		p_1&=\frac{\frac{r_+}{r_-}(\mu_1\mu_2-\mu_1^2)+\lambda\mu_1}{\mu_1\mu_2+\frac{r_+}{r_-}(\mu_1\mu_2-\mu_1^2)+\lambda\mu_1}\;.
	\end{align}
\end{subequations}
$p_1$ is also the mean number of update packets in the server. Therefore, from Little's law, the sojourn time is $\mathbb{E}[S]=p_1/\lambda p_0$ and
\begin{align}
	\mathbb{E}[\text{AoI}_{peak,1}]&=\mathbb{E}[A]+\mathbb{E}[S] \nonumber\\
	&=\frac{1}{\lambda p_0}+\frac{\frac{r_+}{r_-}(\mu_1\mu_2-\mu_1^2)+\lambda\mu_1}{\lambda\mu_1\mu_2}\nonumber\\
	&=\frac{2\frac{r_+}{r_-}(\mu_1\mu_2-\mu_1^2)+2\lambda\mu_1+\mu_1\mu_2}{\lambda\mu_1\mu_2}\;.
	\label{peak_aoi_N1}
\end{align}

Next we present numerical results for the case of finite waiting room in the M/M/1/N queue with $N=1$ (1 packet in service, no waiting) and $N=2$ (1 packet in service, at most one waiting). We expect that having a minimal waiting room, or no waiting room at all, will significantly decrease age, as has been shown previously for the unregulated queue (see \cite{costa2014age,kesidis2020distribution}).

For the sake of comparison, we also calculate the mean peak age for the infinite buffer queue. From (\ref{exp_soj_time}), this equals:
\begin{equation}
	\mathbb{E}[\text{AoI}_{peak,\infty}]=\frac{1}{\lambda}+\frac{\zeta}{\lambda\sigma^{-1}(1-\sigma)}+\frac{1-\zeta}{\mu_2-\lambda}\;,
\end{equation}
where $\lambda$ is the arrival rate and $\zeta$, $\sigma$ are as defined in (\ref{zeta}), (\ref{sigma}), respectively.

Plots of the mean peak age are shown in Fig.~\ref{plots_MM11}. 
Overall, solid lines show results for the infinite buffer queue, dashed lines for $N=1$ and dotted lines for $N=2$.
The same configurations are selected for the ($r_+$, $r_-$) values, as in the performance evaluation in Section~\ref{sec:inf-inf}. For each configuration, we vary the arrival rate $\lambda$, the service rate $\mu_2$, and the buffer size $N$. In all cases we set $\mu_1=1$. 

The range of $\lambda$ values is determined by the stability constraint (\ref{stability_condition}) in the infinite buffer case, and (\ref{stability_reservoir}) in the finite buffer case. Having $N=2$ allows lower arrival rates compared to the $N=1$ case, since there are more packets available to consume energy (the difference can be seen more clearly in Fig.~\ref{plot_mm12_1_2}, \ref{plot_mm12_2_3}). Note that in the infinite buffer case, $\mu_2$ is also bounded by the stability constraint (\ref{stability_condition}). In the finite case it is not bounded, but it can be seen from (\ref{peak_aoi_N1}) that the mean peak age grows to infinity as $\mu_2\to 0$. Results are also shown for the unregulated queue ($\mu_1=\mu_2=1$).
\begin{figure*}[!tb]  
	\begin{subfigure}{0.5\textwidth}
		\includegraphics[width=\linewidth]{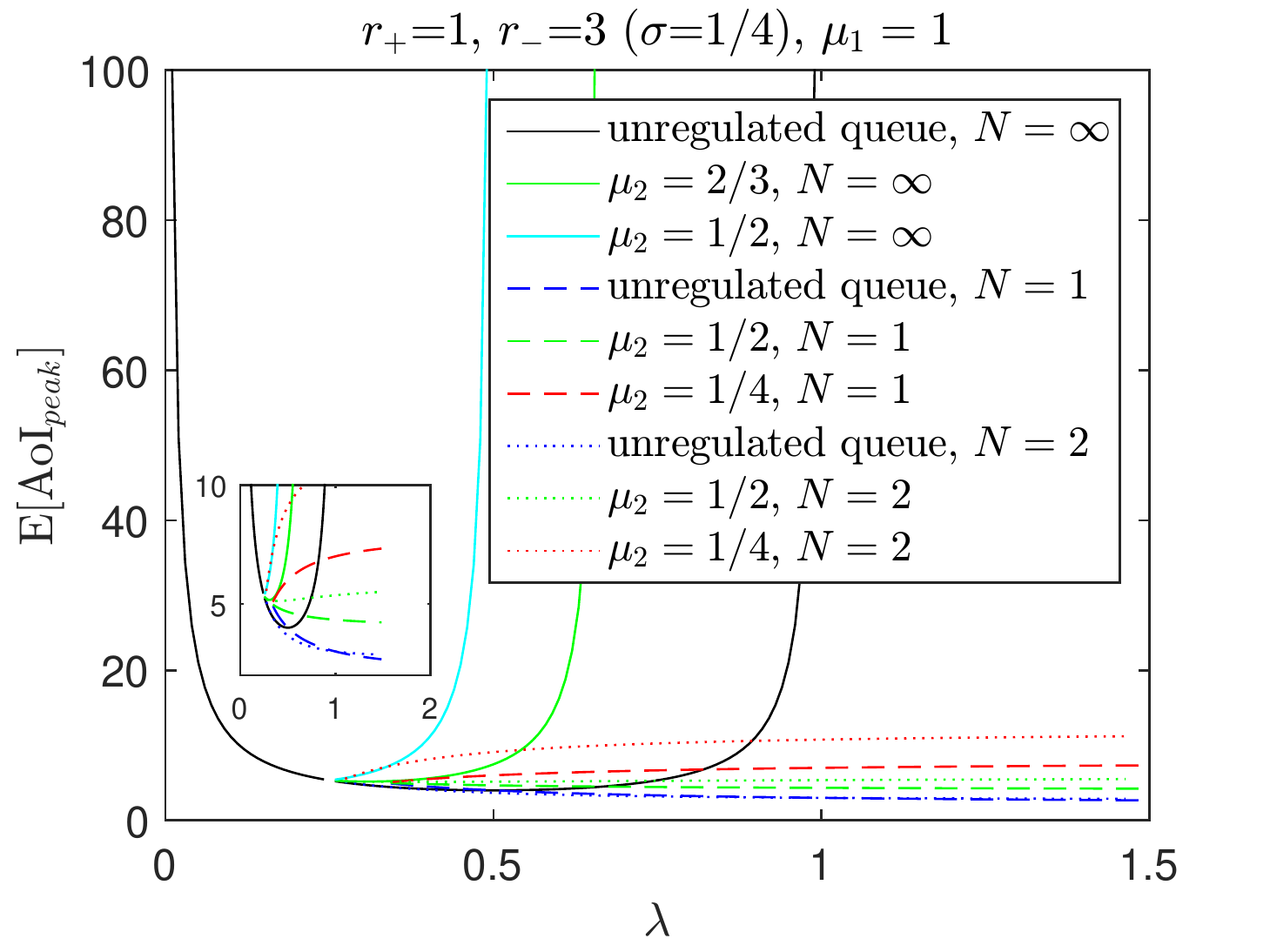}
		\caption{}
		\label{plot_mm12_1_4}
	\end{subfigure}
	\hfill 
	\begin{subfigure}{0.5\textwidth}
		\includegraphics[width=\linewidth]{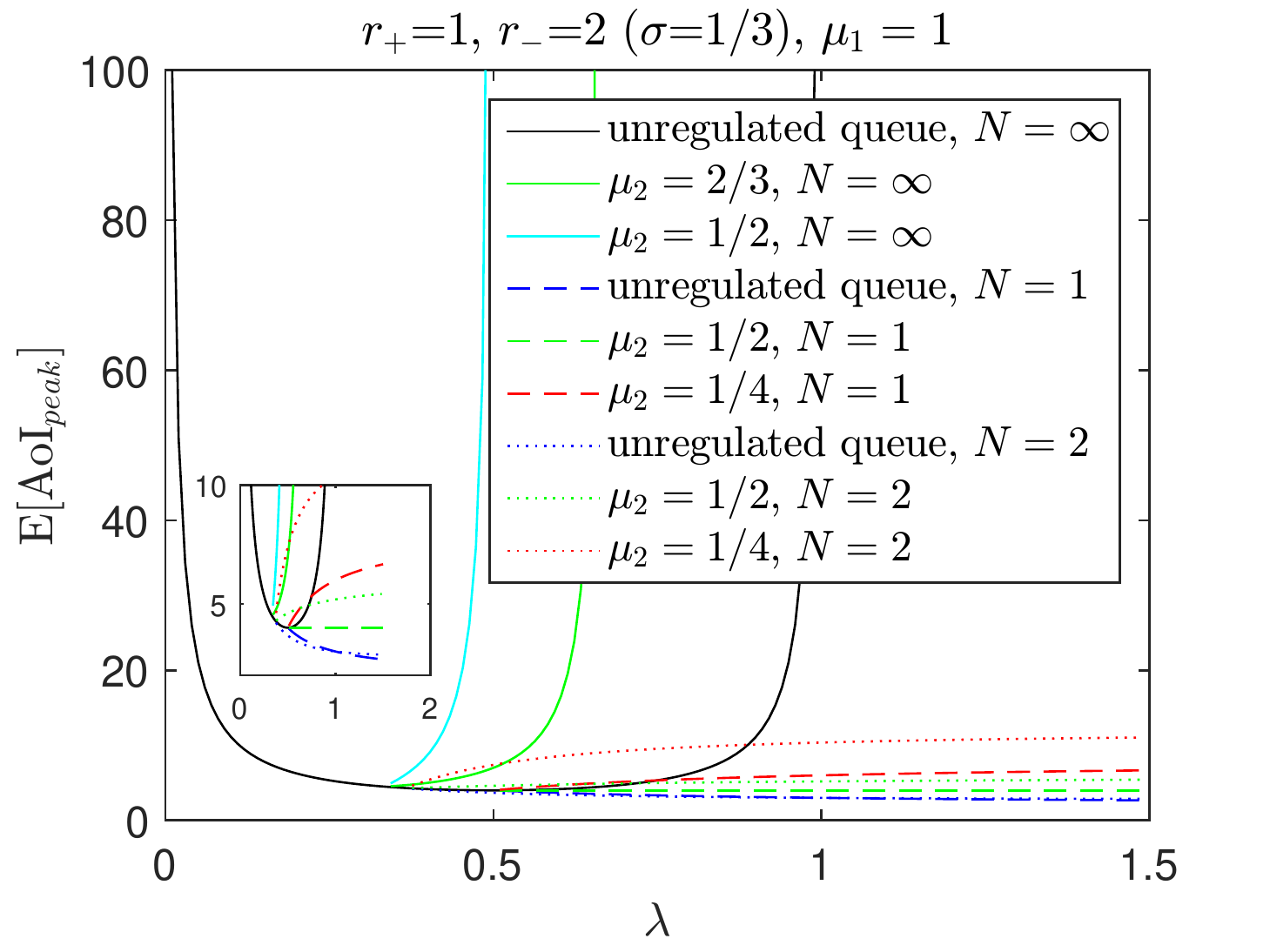}
		\caption{}
		\label{plot_mm12_1_3}
	\end{subfigure}
	\begin{subfigure}{0.5\textwidth}
		\includegraphics[width=\linewidth]{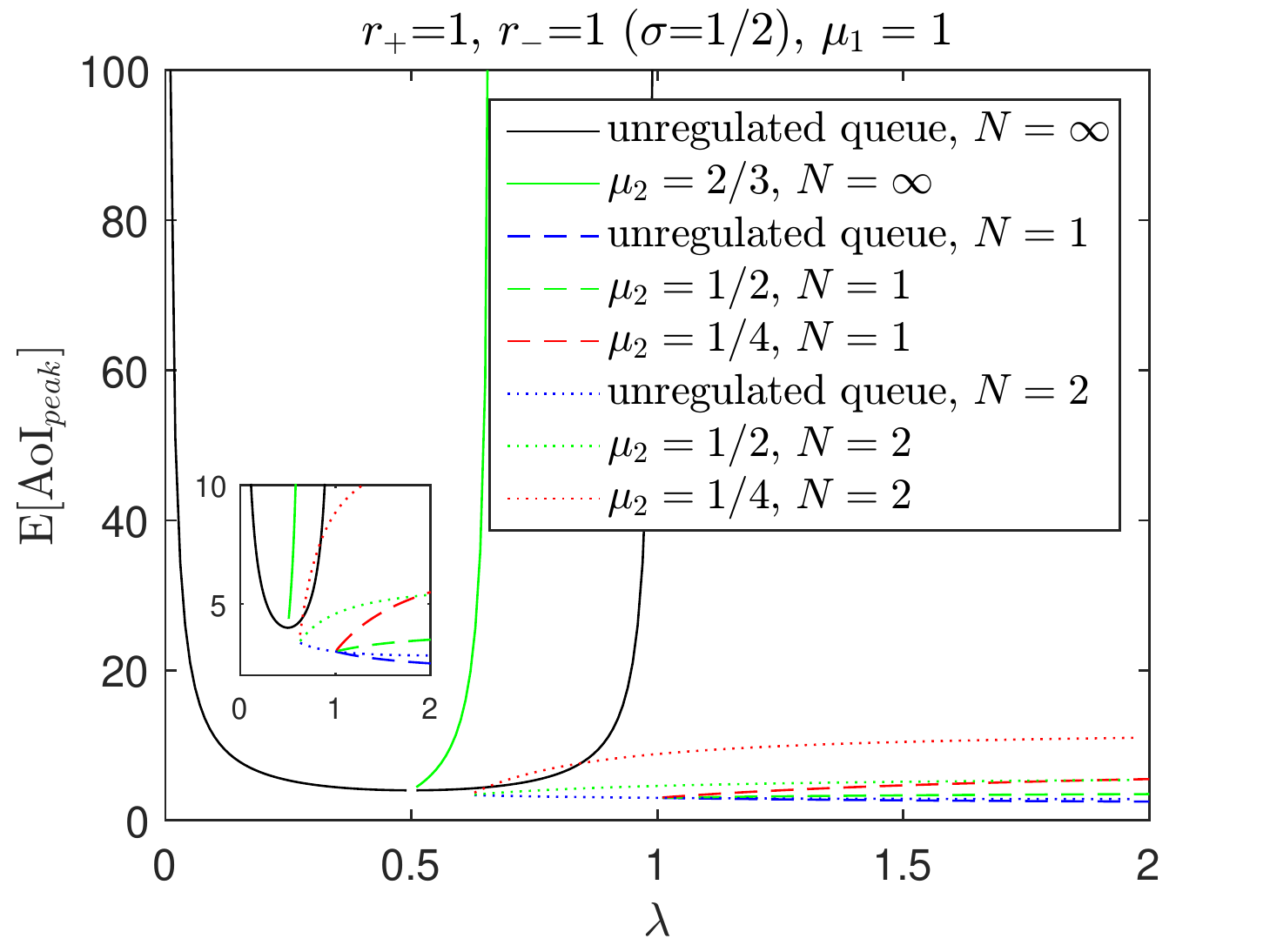}
		\caption{}
		\label{plot_mm12_1_2}
	\end{subfigure}
	\hfill 
	\begin{subfigure}{0.5\textwidth}
		\includegraphics[width=\linewidth]{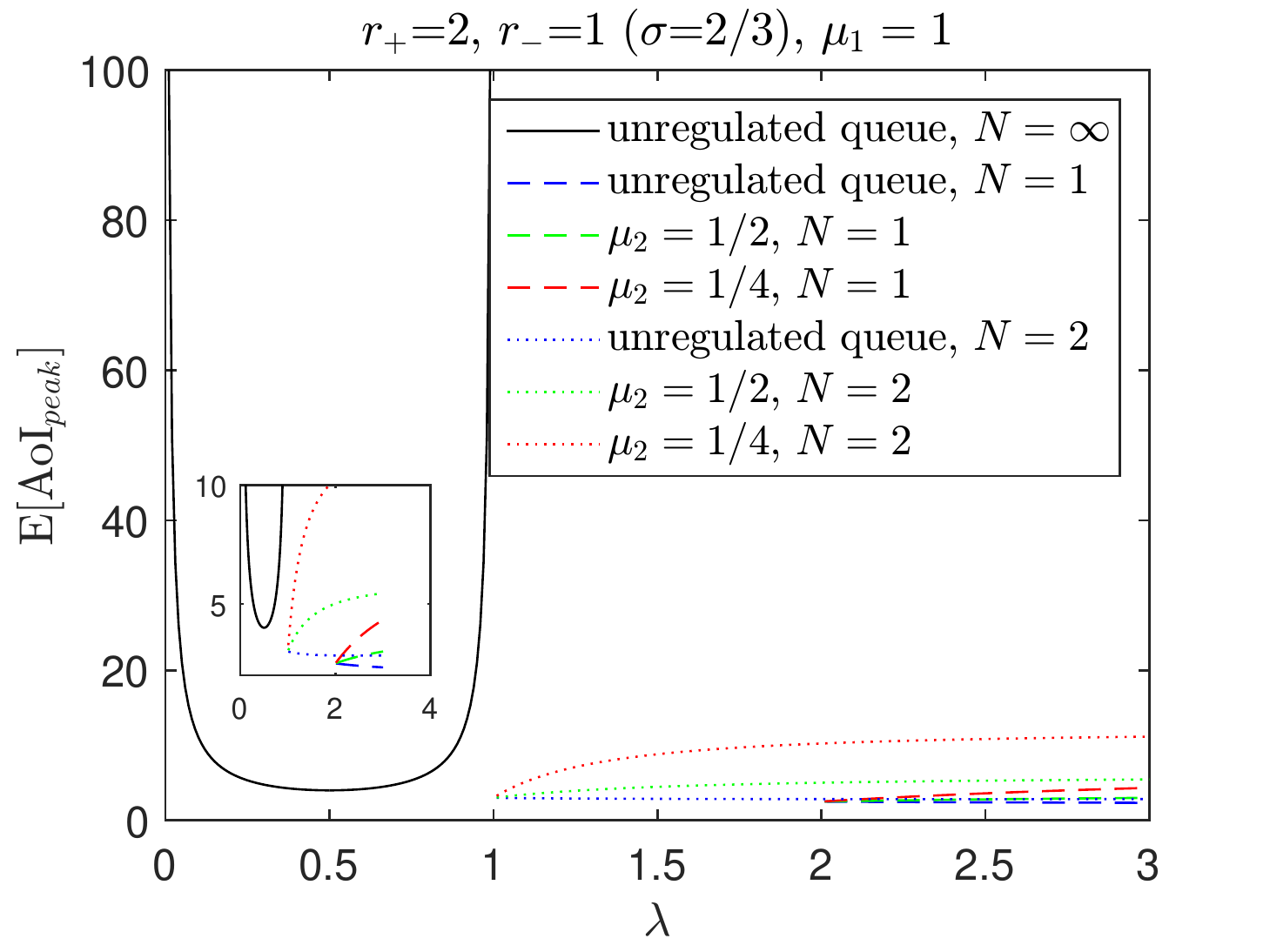}
		\caption{}
		\label{plot_mm12_2_3}
	\end{subfigure}
	\caption{$\mathbb{E}[\text{AoI}_{peak}]$ for finite waiting room and infinite reservoir} 
	\label{plots_MM11}
\end{figure*}

The mean peak AoI behaves similarly to the mean AoI metric for the infinite buffer cases, abruptly increasing after a certain arrival rate threshold, which primarily depends on the value of $\mu_2$.
We observe that the mean peak AoI is always smaller for a finite buffer, except for very small values of the arrival rate (see the inset plots). This was also noted in \cite{costa2014age} (for the mean peak AoI) and \cite{kesidis2020distribution} (for the mean AoI).

In the finite buffer cases, when the arrival rate increases, the behavior depends on the value of $\mu_2$: for relatively high $\mu_2$ values, the mean peak AoI slowly decreases with $\lambda$, while it increases for relatively lower $\mu_2$. Intuitively, the increase happens because the large number of arrivals depletes the energy reservoir, resulting in lower service (and hence lower refresh) rate. 

We also note that setting $N=2$ increases the mean peak AoI, except for small values of $\lambda$ in the unregulated cases and energy-poor regimes (see Fig.~\ref{plot_mm12_1_4}, \ref{plot_mm12_1_3}). However, in all regulated cases with $\mu_2<\mu_1$, adding waiting room did not help at all. 
\section{Infinite Waiting Room and Finite Reservoir}
\label{sec:inf-fin}
In the case of an infinite waiting room and finite reservoir, an approximate model was developed in \cite{adan1998analysis} based on exponentially distributed credit quanta gathered during idle periods of the server, whose size, when kept small enough, accumulates up to the reservoir capacity. The exponential assumption is necessary to model the system as a Markov chain (in busy periods) and then to derive the overall stationary distribution. The analysis leads to a complicated system of multi-variate recursive equations, which for a sufficiently good approximation requires a very large number of equations.

In Table~\ref{table1} we present indicative results for the mean peak AoI for specific parameter settings. The size of the reservoir is denoted by $D$. For finite $D$, the mean peak age $\mathbb{E}[\text{AoI}_{peak}^D]$ is calculated from the value of the sojourn time derived by simulation in \cite{adan1998analysis}, adding the mean interarrival time $1/\lambda$. The results are also compared to the case of an infinite reservoir ($D=\infty$). The constraint (\ref{stability_condition}) must be satisfied for the infinite reservoir, while for a finite reservoir the only constraint is $\lambda/\mu_2<1$.
\begin{table}[htbp]
	\begin{center}
		\caption{Mean peak AoI for different reservoir capacities ($\lambda=1$, $r_{+}=1$) \label{table1}}
		\begin{tabular}{c c c c c}
			\hline
			& & &  & \\[-6pt]
			$\mu_1$	& $\mu_2$  & $r_{-}$ & $D$ & $\mathbb{E}[\text{AoI}_{peak}^D]$ \\[2pt]
			\hline
			& & &  & \\[-6pt]
			\multirow[t]{4}{*}{2}	& \multirow[t]{4}{*}{1.5} & \multirow[t]{4}{*}{2} & 1 & 2.887 \\
			& & & 2 & 2.826 \\
			& & & 5 & 2.744 \\
			& & & $\infty$ & 2.700 \\
			\multirow[t]{4}{*}{1.5}	& \multirow[t]{4}{*}{1.1} & \multirow[t]{4}{*}{1} & 2 & 10.507 \\
			& & & 3 & 10.373 \\
			& & & 5 & 10.162 \\
			& & & $\infty$ & 9.857 \\
			\multirow[t]{4}{*}{1.5}	& \multirow[t]{4}{*}{1.1} & \multirow[t]{4}{*}{2} & 2 & 10.789 \\
			& & & 3 & 10.746 \\
			& & & 5 & 10.693 \\
			& & & $\infty$ & 10.667 \\
			\hline
		\end{tabular} 		
	\end{center}
\end{table}

As was anticipated, decreasing the capacity of the reservoir increases the mean peak age (as less energy is available to serve incoming updates). However, differences are not that large; from the smallest-size capacity, up to the case of an infinite reservoir, differences in age are less than one time unit. The only thing that matters significantly is the load in the server, and the infinite reservoir model seems to be a relatively close lower bound to the mean peak age for finite capacity.
\section{Conclusions and Open Issues}
\label{sec:conclusions}
Summarizing, we have studied the AoI performance of update packets through a single-server queue, regulated by a fluid reservoir.
The major conclusions that can be drawn from this research are as follows: 
\begin{itemize}
	\item Decreasing the reservoir capacity increases age, as less energy is available to serve incoming updates. However, the increase is not very high, and the infinite reservoir model could also be used to study energy-constrained devices.
	\item For an infinite transmitter buffer, the energy constraints require to decrease the update packet arrival rate even further than in the case of an unregulated queue, in order to achieve an optimal age value. But, what is more important for decreasing age, is keeping the transmitter buffer small.
	\item In the case of a small buffer, a higher update packet arrival rate is only helpful under energy-rich regimes, where the reservoir is not frequently depleted, or when the service rate upon depletion can remain relatively high. A similar conclusion is in \cite{arafa2019age}, where the authors stated that the optimal update rate increases with the amount of energy available.
	\item Under energy-poor regimes with a low service rate upon depletion, a high update rate is not recommended, as it depletes the energy reservoir quickly, resulting in lower refresh rates, and subsequently in higher age values. In this case, besides maintaining a low update rate, the best option is not to have waiting room at all, and discard update packets that arrive when the server is busy. 
\end{itemize}

The analysis and performance evaluation of the case where both the transmitter buffer and the energy reservoir are finite remains an open issue. Other issues that could be investigated are the performance for other service disciplines, such as LCFS with and without service preemption, as well as the (much more difficult) cases of multiple classes of update packets, or of a network of many energy-harvesting nodes. Finally, the extension to other (or more general) arrival and service distributions would further enhance the model's applicability to realistic environments.
\bibliographystyle{unsrt}
\bibliography{aoi_energy}

\begin{thebibliography}{10}

\bibitem{kaul2012real}
Sanjit Kaul, Roy Yates, and Marco Gruteser.
\newblock Real-time status: How often should one update?
\newblock In {\em 2012 Proceedings IEEE INFOCOM}, pages 2731--2735. IEEE, 2012.

\bibitem{costa2014age}
Maice Costa, Marian Codreanu, and Anthony Ephremides.
\newblock Age of information with packet management.
\newblock In {\em 2014 IEEE International Symposium on Information Theory},
  pages 1583--1587. IEEE, 2014.

\bibitem{kesidis2020distribution}
George Kesidis, Takis Konstantopoulos, and Michael~A Zazanis.
\newblock The distribution of age-of-information performance measures for
  message processing systems.
\newblock {\em Queueing Systems}, 95:203--250, 2020.

\bibitem{sun2017update}
Yin Sun, Elif Uysal-Biyikoglu, Roy~D Yates, C~Emre Koksal, and Ness~B Shroff.
\newblock Update or wait: How to keep your data fresh.
\newblock {\em IEEE Transactions on Information Theory}, 63(11):7492--7508,
  2017.

\bibitem{yates2020age}
Roy~D Yates, Yin Sun, D~Richard Brown~III, Sanjit~K Kaul, Eytan Modiano, and
  Sennur Ulukus.
\newblock Age of information: An introduction and survey.
\newblock {\em arXiv preprint arXiv:2007.08564}, 2020.

\bibitem{adan1998analysis}
Ivo~JBF Adan, Erik~A van Doorn, JAC Resing, and Werner~RW Scheinhardt.
\newblock Analysis of a single-server queue interacting with a fluid reservoir.
\newblock {\em Queueing Systems}, 29(2):313--336, 1998.

\bibitem{wu2017optimal}
Xianwen Wu, Jing Yang, and Jingxian Wu.
\newblock Optimal status update for age of information minimization with an
  energy harvesting source.
\newblock {\em IEEE Transactions on Green Communications and Networking},
  2(1):193--204, 2017.

\bibitem{arafa2019age}
Ahmed Arafa, Jing Yang, Sennur Ulukus, and H~Vincent Poor.
\newblock Age-minimal transmission for energy harvesting sensors with finite
  batteries: Online policies.
\newblock {\em IEEE Transactions on Information Theory}, 66(1):534--556, 2019.

\bibitem{farazi2018average}
Shahab Farazi, Andrew~G Klein, and D~Richard Brown.
\newblock Average age of information for status update systems with an energy
  harvesting server.
\newblock In {\em IEEE INFOCOM 2018-IEEE Conference on Computer Communications
  Workshops (INFOCOM WKSHPS)}, pages 112--117. IEEE, 2018.

\bibitem{farazi2018age}
Shahab Farazi, Andrew~G Klein, and D~Richard Brown.
\newblock Age of information in energy harvesting status update systems: When
  to preempt in service?
\newblock In {\em 2018 IEEE International Symposium on Information Theory
  (ISIT)}, pages 2436--2440. IEEE, 2018.

\bibitem{zheng2019closed}
Xi~Zheng, Sheng Zhou, Zhiyuan Jiang, and Zhisheng Niu.
\newblock Closed-form analysis of non-linear age of information in status
  updates with an energy harvesting transmitter.
\newblock {\em IEEE Transactions on Wireless Communications}, 18(8):4129--4142,
  2019.

\bibitem{bertsekas1992robert}
Dimitri Bertsekas and Robert Gallager.
\newblock {\em Data Networks (2nd Ed.)}.
\newblock Prentice-Hall, Inc., USA, 1992.

\bibitem{suresh2014efficient}
Harishankar Suresh, Anand Baskaran, KP~Sudharsan, U~Vignesh, T~Viveknath,
  P~Sivraj, and K~Vijith.
\newblock Efficient charging of battery and production of power from solar
  energy.
\newblock In {\em 2014 International Conference on Embedded Systems (ICES)},
  pages 231--237. IEEE, 2014.

\bibitem{boico2007solar}
Florent Boico, Brad Lehman, and Khalil Shujaee.
\newblock Solar battery chargers for {N}i{MH} batteries.
\newblock {\em IEEE Transactions on Power Electronics}, 22(5):1600--1609, 2007.

\end{thebibliography}
\end{document}